\tikzset{radiation/.style={{decorate,decoration={expanding waves,angle=0,segment length=2pt}}}} 
\DeclareMathAlphabet\mathbfcal{OMS}{cmsy}{b}{n}
\newtheorem{lemma}{Lemma}
\newtheorem{corollary}{Corollary}
\newtheorem{fact}{Fact}
\def\blfootnote{\gdef\@thefnmark{}\@footnotetext}
\begin{document}
\title{Finite Blocklength Secrecy Analysis of Polar and Reed-Muller Codes in BEC Semi-Deterministic Wiretap Channels}
\IEEEoverridecommandlockouts
%
\author{\IEEEauthorblockN{Mahdi Shakiba-Herfeh, Laura Luzzi, and Arsenia Chorti}
\IEEEauthorblockA{ETIS UMR8051, CY Universite, ENSEA, CNRS, F-95000, Cergy, France\\
Email: \{mahdi.shakiba-herfeh, laura.luzzi, arsenia.chorti\}@ensea.fr}
}
\maketitle
\begin{abstract} 

In this paper, we consider a semi-deterministic wiretap channel where the main channel is noiseless and the eavesdropper's channel is a binary erasure channel (BEC). 
We provide a lower bound for the achievable secrecy rates of polar and Reed-Muller codes and compare it to the second order coding rate for the semi-deterministic wiretap channel. To the best of our knowledge, this is the first work which demonstrates the secrecy performance of polar and Reed-Muller codes in short blocklengths. The results show that under a total variation secrecy metric, Reed-Muller codes can achieve secrecy rates very close to the second order approximation rate. On the other hand, we observe a significant gap between the lower bound for the achievable rates of polar codes and the the second order approximation rate for short blocklengths.

\end{abstract}

\section{Introduction} \label{sec:intro}

Various physical layer security (PLS) technologies  \cite{Ersi2016, Mehdi2021} are currently being considered for inclusion in the security protocols of sixth generation systems. The interest in PLS concerns primarily massive scale Internet of things (IoT) networks, which have a very wide range of constraints (computational, power, delay, etc.); for such applications, the revision of the standard paradigm of static security controls is currently being discussed. In the general framework of quality of security (QoSec), in which different slices of future networks could be operating under different security levels \cite{Ersi2021}, it is possible that looser confidentiality constraints could be set for low-end IoT slices, e.g., involving short packet delay constrained verticals. 
In this direction, the development of short blocklength wiretap codes emerges as a topic of potentially high practical impact. 

In the seminal paper by Wyner in 1975 \cite{Wyner75}, it was established that keyless confidential communication between two legitimate parties, referred to as Alice and Bob, is possible in the presence of a passive adversary, referred to as Eve, at rates up to the channel's secrecy capacity. Subsequent works characterized the secrecy capacity of more general wiretap models \cite{Csiszar78,Oggier11}. Of notable interest is the recent contribution \cite{yang2019wiretap} in which non asymptotic information theoretic rates were derived, accounting jointly for reliability and secrecy constraints in finite blocklengths.

In terms of practical implementations, numerous proposals have appeared so far in the literature for long blocklengths.  Secrecy capacity achieving coding schemes have been developed  by employing low density parity check codes (LDPCs) \cite{Rathi13}, polar codes \cite{mahdavifar2011}, and lattice codes \cite{Luzzi14}. We note in passing that some works employ Eve's bit error rate or block error rate as a secrecy metric, albeit without any guarantee of statistical independence between the secret message and Eve’s observation. 

To date, there are only a handful of works on coding for secrecy in short blocklengths. In these studies, the secrecy level is evaluated numerically \cite{nooraiepour2020,Pfister17}; as a result these approaches cannot be extended to longer code lengths, e.g., for more than $32$ bits. Finally, for completeness of presentation we note that in \cite{Harrison18, Harrison19} the properties of the best binary wiretap codes at the finite blocklength were investigated.

As a first step in the direction of bridging the existing literature gap, in this paper we analyse the secrecy rates of polar and Reed-Muller codes in semi-deterministic wiretap binary erasure channels (BEC) at short and medium blocklengths. Following \cite{yang2019wiretap}, we consider the average total variation distance (TVD) as the secrecy metric and we derive two lower bounds on the achievable secrecy rate of these codes. To the best of our knowledge, this work is the first to analyse the secrecy performance of polar and Reed-Muller codes in short blocklengths.

The paper is organized as follows. In Section \ref{preliminaries}, we briefly review the construction of polar and Reed-Muller codes. In Section \ref{System}, we introduce the channel model and the secrecy metrics. In Section \ref{secrecy}, which is the main part of our contribution, we present our methodology to analyse the secrecy of codes. Next, in Section \ref{Numerical}, we present numerical results for the lower bound of the achieved secrecy rates of codes for different blocklengths and compare them with the second order approximation rate. Finally, conclusions and open questions are discussed in Section \ref{Conclusion}.

\section{Notation and Preliminaries}\label{preliminaries}
\subsection{Notation}
We use upper case letters to denote random variables (RVs), e.g., $M$ and $Z$ and lower case letters for specific realizations, e.g., $m$ and $z$. $P_Z$ denotes the probability distribution of $Z$ and $P_{MZ}$ the joint probability distribution of the pair $(M,Z)$. $I(M;Z)$, and $I(M;Z|Y)$ stand for mutual information and conditional mutual information, respectively. We denote by $M^n$ a row vector $(M_1,M_2,...,M_n)$, and $M_i$ denotes the $i^{th}$ element of $M^n$. Moreover, $M_i^j$ denotes the subvector $(M_i,M_{i+1},...,M_j)$. The cardinality of a finite set $\mathcal{Z}$ is denoted by $|\mathcal{Z}|$. Moreover, $M_{i_1,...,i_{j}}$ denotes $[M_{i_1}, M_{i_2}, ..., M_{i_{j}}]$. Finally, for two probability measures $P$ and $Q$ on a set $\mathcal{X}$, the total variation distance (TVD) $d(P,Q)$ is defined as follows
\begin{align}
    d(P,Q)= \frac{1}{2}\sum_{x\in \mathcal{X}}\left\lvert P(x)-Q(x)\right\rvert.
\end{align}

\subsection{Polar and Reed-Muller codes}

Polar codes were introduced by Arikan \cite{arikan2009}, who proved that they achieve the capacity for symmetric binary-input channels with a low encoding and decoding complexity. In \cite{mahdavifar2011}, Mahdavifar and Vardy proved that polar codes can asymptotically achieve the secrecy capacity for a wide range of wiretap channels, provided that the main channel and the eavesdropper's channel are binary-input symmetric and that the eavesdropper's channel is degraded with respect to the main channel. The generator matrix of polar codes with blocklength $n=2^m$ is defined as follows
\begin{align}
G_P = \big(\begin{smallmatrix}
  1 & 0\\
  1 & 1
\end{smallmatrix}\big)^{\otimes m},    
\end{align}
where $\otimes$ denotes Kronecker power.\footnote{In this work, in order to simplify notation we consider the version of polar codes without bit reversal. Note that this doesn't affect performance, although it has an impact on complexity \cite{sarkis2014}.}

On the other hand, Reed-Muller codes were introduced by Muller \cite{ReedMuller54}. In \cite{kudekar2017}, it was shown that Reed-Muller codes achieve the capacity of the BEC. Reed–Muller codes can be defined in different equivalent ways. For completeness of presentation, next we introduce some notation before we define the generator matrix of Reed-Muller codes $G_{RM}$.

For $1\leq i\leq m$, let $v_i$ be a $2^m$-tuple over $GF(2)$, generated by $2^{i}$ repetitions of a vector of length $2^{m-i}$, in which the first half elements are 1 and the rest are 0, e.g.,
\begin{align}
    v_i = [\underbrace{1\, 1\, ... \,1}_{\text{$2^{m-i-1}$}} \underbrace{0\, 0\, ... \,0}_{\text{$2^{m-i-1}$}} \underbrace{1\, 1\, ...\, 1}_{\text{$2^{m-i-1}$}} ... \underbrace{0\, 0\, ... \,0}_{\text{$2^{m-i-1}$}}].
\end{align}
Let $v_0$ be the all-ones $2^m$-tuple over $GF(2)$. We define the following Boolean product of two vectors $a=[a_1\, a_2\, ...\, a_n]$ and $b=[b_1\, b_2\, ... \,b_n]$, $a b = [a_1{\cdot}b_1\, a_2{\cdot}b_2\, ... \,a_n{\cdot}b_n]$, where ``$\cdot$'' denotes the Boolean AND operation.

Using the above, the generator matrix $G_{RM}$ of the Reed-Muller code is defined as follows \cite{lin2001}
\begin{align}
G_{RM}=\begin{pmatrix} v_1v_2...v_{n-1}v_n \\ v_2...v_{n-1}v_n\\\vdots\\ v_1v_2 \\ v_m\\ \vdots\\v_1\\v_0 \end{pmatrix}.
\end{align}
The rows of $G_{RM}$ are sorted by the Hamming weights. The first rows have the lowest Hamming weights and the last rows have the highest Hamming weights.

Given $n$ channel uses of a binary input channel $W: \{0,1\} \to \mathcal{Z}$, the bit-channel $W(Z^n,U_{past}|U_i)$, is defined as the binary input channel that takes a single bit $U_i$ as input and the observation vector $Z^n$ and the past inputs $U_{past}$ as output. Arikan showed that as the blocklength asymptotically increases, the bit-channels $W(Z^n,U^{i-1}|U_i)$ polarize, i.e., are either noiseless or completely noisy. Recently, Abbe and Ye connected Reed-Muller codes to polarization theory and showed that polarization of $W(Z^n,U_{past}|U_i)$ also carries over to Reed-Muller codes \cite{abbe2020}. We call the bit-channels which are almost noiseless the ``good bit-channels"  and the bit-channels which are almost completely noisy the ``poor bit-channels".

For the purpose of channel coding, polar codes select the rows of $G_P$ such that the corresponding bit-channels have the lowest Bhattacharyya parameter (parsing the matrix $G_P$ top-to-down), while the remaining bits are considered as frozen bits and fixed to zero. On the other hand, Reed-Muller codes select the last rows of $G_{RM}$, i.e., the rows with the highest Hamming weights, to transmit information, and fix the remaining bits to zero.

\section{System Model}\label{System}
We consider a semi-deterministic wiretap channel where the main channel is noiseless and the wiretap channel is a BEC with erasure probability $p$ (Fig. \ref{Fig:SysModel}). Alice encodes her message $M^k$, drawn from a uniform distribution, into a binary codeword $X^n$ such that Bob observes $Y^n=X^n$ without error, while Eve observes $Z^n$. Moreover, the code rate is defined as $R=\frac{k}{n}$.

\begin{figure}[t]
\centering
\includegraphics[width=0.45\textwidth]{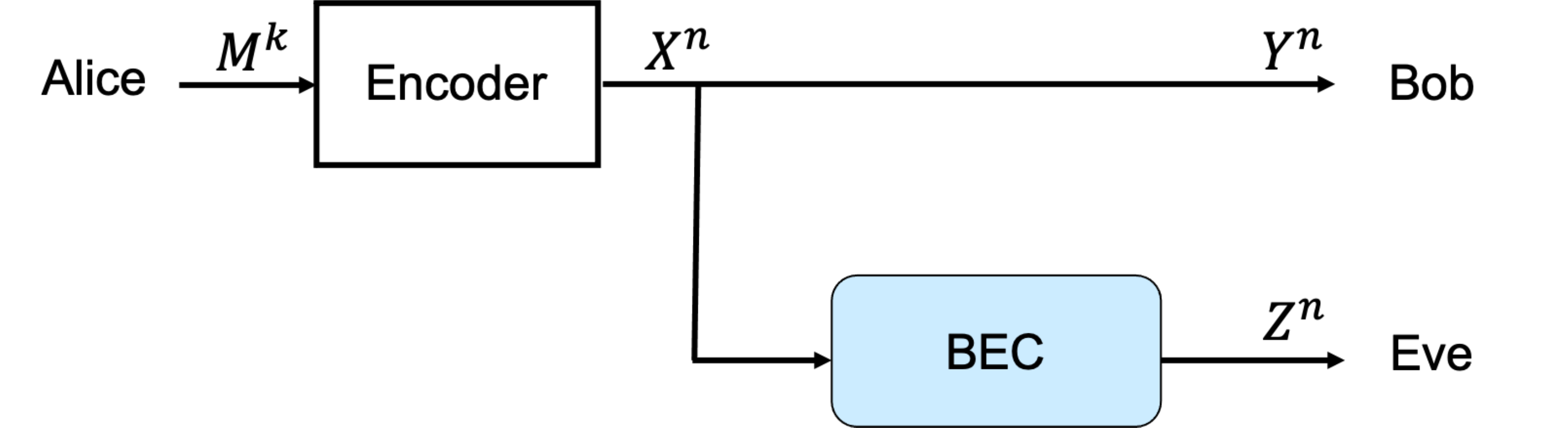}
\caption{The wiretap channel system model.}
\label{Fig:SysModel}
\end{figure}
\vspace{-0.1cm}
\subsection{Coding scheme}
In this paper we consider two wiretap coding schemes based on polar and Reed-Muller codes, following the approach in \cite{mahdavifar2011}. The information bits are sent through the poor bit-channels for Eve. The input vector $U^n$ contains the message vector $M^k$ and the remaining $n-k$ bits are independent and identically distributed (i.i.d.) random bits with uniform distribution. Note that there are no frozen bits in the vector $U^n$ as the direct channel is noiseless. The positions of the message bits  $U^n$ are different in polar and Reed-Muller codes; for polar codes we pick the $k$ rows of $G_P$ such that the corresponding bit-channels have highest Bhattacharyya parameter and for Reed-Muller codes we pick the first $k$ rows of $G_{RM}$, i.e., the rows with the lowest Hamming weights. The relation between the vector $U^n$ and the codeword $X^n$ is as follows
\begin{align}\label{Eq:codeword}
    X^n = U^nG,
\end{align}
where $G$ refers to the generator matrix $G_P$ or $G_{RM}$ respectively, based on the scheme we are employing. We write $i_j$ to denote the position of $M_j$ in $U^n$.

\subsection{Secrecy metric and theoretical limit}
In this paper, similar to \cite{yang2019wiretap}, we choose the average TVD as the secrecy metric under the secrecy condition
\begin{align}
    d(P_{M^kZ^n},\mathcal{Q}^{unif}_MP_{Z^n})<\delta,
\end{align}
where $\mathcal{Q}_M^{unif}$ denotes the uniform distribution on the set $\mathcal{M}$. In \cite{yang2019wiretap}, the authors derive the complete characterization of the second-order secrecy rate for semi-deterministic wiretap channels as follows,
\begin{align}
\label{Eq:Approx}
    R(n,\delta) = C_s - \sqrt{\frac{V_s}{n}}Q^{-1} \left(\frac{\delta}{1-\epsilon}\right)+\mathcal{O} \left(\frac{\log n}{n}\right),
\end{align}
where $\delta$ and $\epsilon$ refer to secrecy and reliability constraints, $C_s$ is the secrecy capacity, and $Q$ denotes the tail probability of the standard Gaussian distribution. Finally, $V_s$ is the conditional variance of the information density of $P_{XZ}$ defined as:
\begin{align}\label{Eq:Vs}
\resizebox{1\hsize}{!}{$V_s = \sum\limits_{x\in \mathcal{X}}P_X(x)\bigg(\sum\limits_{z\in \mathcal{Z}} P_{Z|X}(z|x) \log^2\frac{P_{Z|X}(z|x)}{P_Z(z)}
    -D(P_{Z|X=x}||P_Z)^2\bigg),$}
\end{align}
where $D(\cdot||\cdot)$ is the Kullback–Leibler divergence.

In the case in which Bob's channel is noiseless and Eve's channel is a BEC($p$), we have $C_s=p$. To calculate $V_s$, note that $D(P_{Z|X=x}||P_Z)=1-p$, so that from  (\ref{Eq:Vs}) we get that $V_s = p(1-p)$. In this paper we consider the perfect reliability scenario, i.e., $\epsilon = 0$.


\section{Secrecy Analysis of Polar and Reed-Muller Codes}\label{secrecy}
In this section we derive an upper bound for the average TVD of the message $M^k$ and Eve's channel observation $Z^n$ for general wiretap channels.
\subsection{General bounds for the average total variation distance}
\begin{lemma}
(Bound 1) The following inequality holds
\begin{align}\label{Eq:TVD}
d(P_{M^kZ^n},\frac{1}{2^k}P_{Z^n}) \leq \sum_{i=1}^k d(P_{M^iZ^n},\frac{1}{2}P_{M^{i-1}Z^n}).
\end{align}
\end{lemma}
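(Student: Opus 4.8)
The plan is to use a hybrid (telescoping) argument combined with the triangle inequality for the total variation distance. The left-hand side compares the true joint law $P_{M^kZ^n}$ with the ideal law in which the message is uniform and statistically independent of Eve's observation, whereas each term on the right measures how far $M_i$ is from being uniform and independent of its past $(M^{i-1},Z^n)$. I would therefore interpolate between the two endpoints by switching the message coordinates to the uniform law one at a time, so that consecutive hybrids differ in exactly one symbol.

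Concretely, first I would define, for $0\le i\le k$, the hybrid distribution on $(m^k,z^n)$ by
\begin{align}
Q_i(m^k,z^n) = \frac{1}{2^{k-i}}\,P_{M^iZ^n}(m^i,z^n),
\end{align}
i.e.\ the first $i$ message symbols retain their true joint law with $Z^n$ while the remaining $k-i$ symbols are independent and uniform. The two extreme cases recover precisely the distributions on the left: since $M^0$ is empty we have $Q_0(m^k,z^n)=\tfrac{1}{2^k}P_{Z^n}(z^n)$, and since $k-k=0$ we have $Q_k=P_{M^kZ^n}$.

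Next I would apply the triangle inequality along this telescoping chain, obtaining
\begin{align}
d\!\left(P_{M^kZ^n},\tfrac{1}{2^k}P_{Z^n}\right) = d(Q_k,Q_0) \le \sum_{i=1}^k d(Q_i,Q_{i-1}).
\end{align}
It then remains to identify each consecutive gap. Writing out $Q_i-Q_{i-1}$, the prefactor $\tfrac{1}{2^{k-i}}$ multiplies the quantity $P_{M^iZ^n}(m^i,z^n)-\tfrac{1}{2}P_{M^{i-1}Z^n}(m^{i-1},z^n)$, which depends only on $(m^i,z^n)$. Summing its absolute value over the $k-i$ free coordinates $m_{i+1},\dots,m_k$ yields a factor $2^{k-i}$ that cancels the normalization, so that $d(Q_i,Q_{i-1}) = d\!\left(P_{M^iZ^n},\tfrac{1}{2}P_{M^{i-1}Z^n}\right)$, which delivers the claim.

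The argument is essentially bookkeeping, so I do not expect a deep obstacle; the one place to be careful is the normalization, namely checking that the factor $2^{k-i}$ gained from summing out the free message symbols exactly cancels the $\tfrac{1}{2^{k-i}}$ prefactor, so each hybrid gap reproduces the intended term with no stray constant. Choosing the hybrids so that $Q_0$ and $Q_k$ land exactly on the two distributions in the statement, rather than on rescaled versions, is the only design decision that makes the telescoping close cleanly.
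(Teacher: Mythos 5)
Your proposal is correct and is essentially the paper's own argument: the paper likewise lifts each term $d(P_{M^iZ^n},\tfrac{1}{2}P_{M^{i-1}Z^n})$ to the full space by weighting with the uniform law of $M_{i+1}^k$ and then applies the triangle inequality to the resulting telescoping sum. Your explicit hybrid distributions $Q_i$ are just a cleaner packaging of the same computation, and your normalization check (the factor $2^{k-i}$ cancelling the prefactor) is exactly the step the paper performs in its line (a).
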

\begin{IEEEproof}
\begin{align}\label{Eq:TVD_proof}
&\sum_{i=1}^k d(P_{M^iZ^n},\frac{1}{2}P_{M^{i-1}Z^n}) \nonumber\\
&=\frac{1}{2} \sum_{i=1}^k \sum_{m^k,z^n}\left\lvert P_{M^iZ^n}(m^i,z^n)-\frac{1}{2}P_{M^{i-1}Z^n}(m^{i-1},z^n)\right\rvert \nonumber\\
&=\frac{1}{2} \sum_{i=1}^k \sum_{m^k,z^n} P_{M_{i+1}^k}(m_{i+1}^k) \nonumber\\
& ~~~~~~~~~~~~~~\left\lvert P_{M^iZ^n}(m^i,z^n)-\frac{1}{2}P_{M^{i-1}Z^n}(m^{i-1},z^n)\right\rvert\nonumber\\
&\overset{\text{a}}=\frac{1}{2}\sum_{m^k,z^n} \bigg(\frac{1}{2^{k-1}} \left\lvert  P_{M_1Z^n}(m_1,z^n)-\frac{1}{2}P_{Z^n}(z^n)\right\rvert+ ...\nonumber\\
&+ \frac{1}{2}\left\lvert P_{M^{m-1}Z^n}(m^{m-1},z^n)-\frac{1}{2}P_{M^{k-2}Z^n}(m^{k-2},z^n)\right\rvert  \nonumber\\
&+  \left\lvert P_{M^kZ^n}(m^k,z^n)-\frac{1}{2}P_{M^{k-1}Z^n}(m^{k-1},z^n)\right\rvert\bigg) \nonumber\\
& \overset{\text{b}}\geq \frac{1}{2}\sum_{m^k,z^n} \bigg|\frac{1}{2^{k-1}}P_{M_1Z^n}(m_1,z^n) -\frac{1}{2^{k}}P_{Z^n}(z^n)\nonumber\\  &+\frac{1}{2^{k-2}}P_{M^2Z^n}(m^2,z^n)- \frac{1}{2^{k-1}}P_{M_1Z^n}(m_1,z^n) + ...\nonumber\\ &-\frac{1}{2}P_{M^{k-1}Z^n}(m^{k-1},z^n) + P_{M^{k}Z^n}(m^{k},z^n)\bigg|\nonumber\\
&=\frac{1}{2}\sum_{m^k,z^n} \left\lvert P_{M^k Z^n}(m^k ,z^n)-\frac{1}{2^k}P_{Z^n}(z^n)\right\rvert \nonumber\\
&= d(P_{M^kZ^n},P_{M^k}P_{Z^n}),
\end{align}
where in (a) we combine all sums in a single sum and use $P_{M_i^k} = \frac{1}{2^{k-i+1}}$, while in (b) we use the triangle inequality.
\end{IEEEproof}

\begin{lemma}
The bit-channel $W(Z^nU_{i_1,...,i_{j-1}}|U_{i_{j}})$ is degraded with respect to the bit-channel $W(Z^nU^{i_{j}-1}|U_{i_{j}})$.\footnote{Without loss of generality, we can assume that $i_1 < i_2 < ... <i_k$.}
\end{lemma}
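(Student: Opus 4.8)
The plan is to exhibit an explicit stochastic post-processing that turns the genie-aided bit-channel $W(Z^nU^{i_j-1}|U_{i_j})$ into $W(Z^nU_{i_1,...,i_{j-1}}|U_{i_j})$. Recall that a channel $W_2$ is (stochastically) degraded with respect to a channel $W_1$ sharing the same input alphabet if there exists a transition kernel $P$ with $W_2(\cdot|x)=\sum_{y}W_1(y|x)\,P(\cdot|y)$ for every input symbol $x$; informally, the output of $W_2$ can be simulated from the output of $W_1$ by a map that never sees the input. So the goal is to produce such a kernel for our two bit-channels.

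First I would identify the two output alphabets. Both channels take the single bit $U_{i_j}$ as input. The output of $W_1=W(Z^nU^{i_j-1}|U_{i_j})$ is the pair $(Z^n,U_1,\dots,U_{i_j-1})$, whereas the output of $W_2=W(Z^nU_{i_1,...,i_{j-1}}|U_{i_j})$ is $(Z^n,U_{i_1},\dots,U_{i_{j-1}})$. By the ordering convention of the footnote, $i_1<i_2<\dots<i_{j-1}<i_j$, so the index set $\{i_1,\dots,i_{j-1}\}$ is contained in $\{1,\dots,i_j-1\}$. Hence the output of $W_2$ is obtained from that of $W_1$ by keeping $Z^n$ together with the message-bit coordinates $U_{i_1},\dots,U_{i_{j-1}}$ and discarding the remaining past coordinates $\{U_\ell:\ell\le i_j-1,\ \ell\notin\{i_1,\dots,i_{j-1}\}\}$.

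I would then take the degrading kernel $P$ to be exactly this deterministic projection, which manifestly does not depend on the input $U_{i_j}$. Composing $W_1$ with $P$ amounts to marginalizing the transition probability $W_1(z^n,u^{i_j-1}|u_{i_j})$ over the discarded past coordinates. Since the future coordinates $U_{i_j+1},\dots,U_n$ are already marginalized identically in the definitions of both bit-channels, summing $W_1(z^n,u^{i_j-1}|u_{i_j})$ over the non-message past bits yields precisely the joint conditional law of $(Z^n,U_{i_1},\dots,U_{i_{j-1}})$ given $U_{i_j}$, that is, $W_2(z^n,u_{i_1},\dots,u_{i_{j-1}}|u_{i_j})$. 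This gives $W_2=P\circ W_1$ and therefore the claimed degradation.

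The only point that requires care, and it is routine here, is to check that this marginalization reproduces $W_2$ exactly and that the post-processing is genuinely input-independent; both are immediate because the projection is a fixed deterministic map and the future bits enter the two bit-channels in the same way. No distributional assumptions beyond the stated i.i.d.\ uniform law on $U^n$ are invoked, so the argument is purely structural. The usefulness of the lemma is that it lets the harder-to-analyze message-bit channel $W(Z^nU_{i_1,...,i_{j-1}}|U_{i_j})$ inherit the polarization/noisiness guarantees of the standard genie-aided bit-channel, which is what the subsequent total-variation bounds exploit.
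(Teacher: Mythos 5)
Your proof is correct and follows exactly the paper's argument: the paper's one-line proof likewise observes that the second bit-channel is obtained from the first by discarding the extra (non-message) past outputs, i.e., by an input-independent deterministic projection. You have simply spelled out in full the marginalization that the paper leaves implicit.
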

\begin{IEEEproof}
The proof comes from the fact that the extra outputs of bit-channel $W(Z^nU^{i_{j}-1}|U_{i_{j}})$ can be just dismissed to obtain the bit-channel $W(Z^nU_{i_1,...,i_{j-1}}|U_{i_{j}})$.
\end{IEEEproof}

\begin{lemma}
If the channel $Q$ is degraded with respect to channel $W$ (Fig.\ref{Fig:Degraded}), then $d(P_{XY},P_{X}P_{Y}) \geq d(P_{XZ},P_{X}P_{Z})$, where $Y$ and $Z$ are the outputs of channels $W$ and $Q$, respectively.
\end{lemma}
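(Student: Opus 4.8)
The plan is to prove the statement as a data processing inequality for the total variation distance. The only feature of degradation the argument will use is the cascade it induces: $Q$ being degraded with respect to $W$ means there is an auxiliary channel $T$, with transition probabilities $T(z|y)$, such that feeding the output $Y$ of $W$ into $T$ produces the output $Z$ of $Q$. Equivalently, $X$, $Y$, $Z$ form a Markov chain $X \to Y \to Z$ in which $P_{Z|XY}(z|x,y)=T(z|y)$ does not depend on $x$. I would open the proof by fixing this cascade, as everything that follows is elementary.

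The core of the argument is a single identity obtained by marginalizing over $Y$. Since $P_{XZ}(x,z)=\sum_y P_{XY}(x,y)\,T(z|y)$ and $P_X(x)P_Z(z)=\sum_y P_X(x)P_Y(y)\,T(z|y)$, subtracting the two yields
\begin{align}
P_{XZ}(x,z)-P_X(x)P_Z(z)=\sum_y \big[P_{XY}(x,y)-P_X(x)P_Y(y)\big]\,T(z|y).
\end{align}
In words, the gap from independence at $Q$'s output is a $T$-weighted linear image of the gap from independence at $W$'s output.

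From this identity the inequality follows by two standard moves. Starting from $d(P_{XZ},P_XP_Z)=\tfrac{1}{2}\sum_{x,z}|P_{XZ}(x,z)-P_X(x)P_Z(z)|$, I would substitute the identity, apply the triangle inequality to bring the absolute value inside the sum over $y$, exchange the order of summation, and then use the row-stochasticity $\sum_z T(z|y)=1$ to carry out the $z$-sum. The expression collapses to $\tfrac{1}{2}\sum_{x,y}|P_{XY}(x,y)-P_X(x)P_Y(y)|=d(P_{XY},P_XP_Y)$, which is exactly the claimed bound.

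I do not expect a genuine obstacle: the substance is just the triangle inequality combined with the fact that $T$ is a stochastic matrix. The only step deserving care is the first one, correctly reading ``$Q$ degraded with respect to $W$'' as a cascade whose second stage $T$ depends on $Y$ alone and not on $X$. This independence from $X$ is precisely what licenses pulling $T(z|y)$ outside the bracket in the identity above, and hence it is the hinge on which the whole estimate rests.
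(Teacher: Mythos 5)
Your proposal is correct and follows essentially the same route as the paper: both express the deviation from independence at $Z$ as the channel $P_{Z|Y}$ applied to the deviation at $Y$ (the paper merely factors out $P_X(x)$ and works with conditionals), then apply the triangle inequality and the stochasticity of the degrading channel. No substantive difference.
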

\begin{figure}[t]
\centering
\begin{tikzpicture}[scale=2,
nodetype2/.style={
	rectangle,
	minimum width=10mm,
	minimum height=7mm,
	align=center,
	draw=black,thick,
},
tip2/.style={-latex, thick, shorten >=0.4mm}
]
\matrix[row sep=0.5cm, column sep=1.3cm, ampersand replacement=\&]{
\node (A) {}; \&
\node (W) [nodetype2]  {$W$}; \&
\node (Q) [nodetype2] {$Q'$}; \&
\node (Z)  {}; 
\\};
\draw[->] (A) edge[tip2] node [above] {\footnotesize $X$} (W);
\draw[->] (W) edge[tip2] node [above] {\footnotesize $Y$}  (Q) ;
\draw[->] (Q) edge[tip2] node [above] {\footnotesize $Z$} (Z) ;
\draw[dashed, thick] (-1,-0.3) rectangle (1,0.3);
\draw node at (0,-0.45) {$Q$};
\end{tikzpicture}
\caption{Channel $Q$ is degraded with respect to channel $W$.}
\label{Fig:Degraded}
\end{figure}
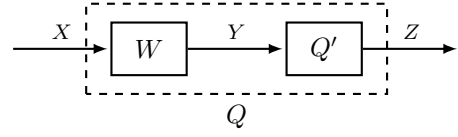
\vspace{-0.05cm}
\begin{IEEEproof}
\begin{align}
    &d(P_{XZ},P_{X}P_Z) = \frac{1}{2} \sum_{x,z}\left\lvert P_{XZ}(x,z)-P_X(x)P_Z(z)\right\rvert\nonumber\\
    &=  \frac{1}{2}\sum_{x} P_X(x) \sum_{z}\left\lvert P_{Z|X}(z|x)-P_Z(z)\right\rvert \nonumber\\
    &=  \frac{1}{2}\sum_{x} P_X(x) \sum_{z}\left\lvert\sum_{y}P_{Z|Y}(z|y)\left(P_{Y|X}(y|x)-P_Y(y)\right)\right\rvert\nonumber\\
    &\overset{\text{a}}\leq  \frac{1}{2}\sum_{x} P_X(x) \sum_{y}\sum_{z}P_{Z|Y}(z|y)\left\lvert P_{Y|X}(y|x)-P_Y(y)\right\rvert\nonumber\\
    &=  \frac{1}{2}\sum_{x} P_X(x) \sum_{y}\left\lvert P_{Y|X}(y|x)-P_Y(y)\right\rvert\nonumber\\
    &=  d(P_{XY},P_XP_Y),
\end{align}
where in (a) we use the triangle inequality.
\end{IEEEproof}

\begin{corollary}
The following inequality holds
\begin{align}
  d(P_{U_{i_1,...,i_{j}}Z^n}&,P_{U_{i_j}}P_{U_{i_1, ...,i_{j-1}}Z^n}) \nonumber\\
  &\leq d(P_{U^{i_j}Z^n},P_{U_{i_j}}P_{U^{i_{j}-1}Z^n})  
\end{align}
\end{corollary}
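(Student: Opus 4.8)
The plan is to obtain the corollary as a direct application of Lemma 3, feeding it the degradedness relation established in Lemma 2. The key observation is that both sides of the claimed inequality already have the form $d(P_{XV},P_XP_V)$ that appears in Lemma 3, once we identify the single input bit $U_{i_j}$ with $X$ and recognize the two candidate channel outputs as $Y$ and $Z$.

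First I would set $X = U_{i_j}$, take $W$ to be the bit-channel $W(Z^nU^{i_j-1}|U_{i_j})$ with output $Y=(Z^n,U^{i_j-1})$, and take $Q$ to be the bit-channel $W(Z^nU_{i_1,...,i_{j-1}}|U_{i_j})$ with output $Z=(Z^n,U_{i_1,...,i_{j-1}})$. Because the indices are increasing, we have $\{i_1,\dots,i_{j-1}\}\subseteq\{1,\dots,i_j-1\}$, so Lemma 2 tells us that $Q$ is degraded with respect to $W$, and Lemma 3 therefore applies to this pair.

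Next I would match the distributions on each side. Since $U^{i_j}=(U^{i_j-1},U_{i_j})$, the joint law of $(X,Y)$ is exactly $P_{U^{i_j}Z^n}$ and the product law $P_XP_Y$ equals $P_{U_{i_j}}P_{U^{i_j-1}Z^n}$, so $d(P_{XY},P_XP_Y)$ is precisely the right-hand side of the corollary. Likewise, the joint law of $(X,Z)$ is $P_{U_{i_1,...,i_j}Z^n}$ and $P_XP_Z=P_{U_{i_j}}P_{U_{i_1,...,i_{j-1}}Z^n}$, so $d(P_{XZ},P_XP_Z)$ is the left-hand side. Lemma 3 gives $d(P_{XY},P_XP_Y)\geq d(P_{XZ},P_XP_Z)$, which is exactly the stated inequality.

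Since the argument is essentially a relabeling that chains the two preceding results, I do not expect a genuine technical obstacle. The only point requiring care is the bookkeeping of which components of the output vectors play the roles of $Y$ and $Z$ in Lemma 3, together with the verification that the index-set inclusion needed to invoke Lemma 2 follows from the assumed ordering $i_1<\cdots<i_k$.
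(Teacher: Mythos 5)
Your proposal is correct and follows exactly the route the paper takes: the paper's proof of this corollary is simply an appeal to Lemma 2 (degradedness of the bit-channel with the reduced side information) combined with Lemma 3 (TVD is non-increasing under channel degradation). Your write-up just makes explicit the variable identifications and the use of the ordering $i_1<\cdots<i_k$ that the paper leaves implicit.
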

\begin{proof}
The inequality can be easily shown using Lemma 2 and Lemma 3.
\end{proof}

\begin{corollary}
(Bound 2) The average TVD of $M^k$ and $Z^n$ is upper bounded by the sum of the average TVD of the input-output of bit-channel over the message bits $M_i$.
\begin{align}
  d(P_{M^kZ^n},\frac{1}{2^k}P_{Z^n}) \leq \sum_{j=1}^k d(P_{U^{i_j}Z^n},\frac{1}{2}P_{U^{i_j-1}Z^n}) 
\end{align}
\end{corollary}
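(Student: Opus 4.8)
The plan is to obtain Bound 2 simply by chaining together Lemma 1 (Bound 1) and Corollary 1, with the uniformity of the input bits serving as the glue between the two. The key observation is purely notational: since $M_j = U_{i_j}$ by the coding scheme, the message prefix $M^j = (M_1,\dots,M_j)$ is exactly the subvector $U_{i_1,\dots,i_j}$. Hence each summand on the right-hand side of Bound 1 can be rewritten as $d(P_{U_{i_1,\dots,i_j}Z^n},\tfrac12 P_{U_{i_1,\dots,i_{j-1}}Z^n})$.

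First I would apply Lemma 1 to get
\begin{align}
d\Bigl(P_{M^kZ^n},\tfrac{1}{2^k}P_{Z^n}\Bigr) \leq \sum_{j=1}^k d\Bigl(P_{M^jZ^n},\tfrac12 P_{M^{j-1}Z^n}\Bigr),
\end{align}
which reduces the problem to bounding each individual term. Next, since the bits of $U^n$ are i.i.d.\ uniform, $P_{U_{i_j}}(u)=\tfrac12$ for every $u$, so that $\tfrac12 P_{M^{j-1}Z^n} = P_{U_{i_j}}P_{U_{i_1,\dots,i_{j-1}}Z^n}$. This lets me recast the $j$-th summand exactly in the form appearing on the left-hand side of Corollary 1.

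I would then apply Corollary 1 termwise, yielding
\begin{align}
d\Bigl(P_{M^jZ^n},\tfrac12 P_{M^{j-1}Z^n}\Bigr) \leq d\Bigl(P_{U^{i_j}Z^n},P_{U_{i_j}}P_{U^{i_j-1}Z^n}\Bigr),
\end{align}
and use uniformity once more in the opposite direction to replace $P_{U_{i_j}}$ by $\tfrac12$, obtaining the bound $d(P_{U^{i_j}Z^n},\tfrac12 P_{U^{i_j-1}Z^n})$. Summing over $j$ from $1$ to $k$ completes the argument.

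I do not expect any genuine analytic obstacle here, since the statement is a direct composition of results already established: Lemma 1 supplies the telescoping decomposition, while Corollary 1 (itself built from the degradation chain of Lemma 2 and the data-processing inequality of Lemma 3) handles each term. The only point requiring care is the bookkeeping of indices — correctly identifying $M^j$ with $U_{i_1,\dots,i_j}$ and keeping the ordering $i_1<\dots<i_k$ — together with the repeated use of the uniform prior $P_{U_{i_j}}=\tfrac12$ to pass freely between the factor $\tfrac12$ and the marginal $P_{U_{i_j}}$.
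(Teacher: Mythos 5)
Your proposal is correct and follows exactly the paper's route: the authors also prove this corollary by combining Lemma 1 (the telescoping decomposition) with Corollary 1 applied termwise, with the identification $M^j = U_{i_1,\dots,i_j}$ and the uniform prior making the two forms match. You have merely spelled out the bookkeeping that the paper leaves implicit.
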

\begin{proof}
The proof comes from Lemma 1 and Corollary 1.
\end{proof}
\subsection{Bounds on the TVD and leakage for the BEC}
In this section we specialize the previous bounds to the case of the BEC. First, we note the following property, which holds for any linear code \cite{fazeli2014}.
\begin{fact}
If the channel $W$ is a BEC, the bit channels $W(Z^n,U^{i-1}|U_i)$ are also BECs.
\end{fact}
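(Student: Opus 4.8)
The plan is to prove the statement directly from the linearity of the code together with the special structure of the BEC, by conditioning on the erasure pattern. Recall that a binary-input channel is a BEC if and only if every output either determines the input with certainty or leaves it equiprobable; equivalently, the likelihood ratio $W(z^n,u^{i-1}\mid U_i=0)/W(z^n,u^{i-1}\mid U_i=1)$ takes only the values $0$, $1$, or $\infty$. I would therefore aim to show that this trichotomy holds for every output $(z^n,u^{i-1})$ of positive probability.

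First I would encode Eve's observation $Z^n$ through the erasure pattern, i.e., the set $\mathcal{E}\subseteq\{1,\dots,n\}$ of non-erased coordinates together with the values $\{X_j\}_{j\in\mathcal{E}}$ revealed there. Since the BEC erases each coordinate independently of its input, $\mathcal{E}$ is independent of $X^n$ and hence of $U^n$, and conditioning on $\mathcal{E}$ contributes the common factor $(1-p)^{|\mathcal{E}|}p^{n-|\mathcal{E}|}$ to every transition probability. Writing the bit-channel transition probability as a sum over the future bits $U_{i+1}^n$ weighted by $2^{-(n-1)}$, this common factor and the uniform weight cancel in the likelihood ratio, so it suffices to compare the number of completions $U_{i+1}^n$ consistent with the observation for $U_i=0$ versus $U_i=1$.

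Next I would exploit linearity. Because $X^n=U^nG$, for a fixed past $U^{i-1}$ the constraints $X_j=Z_j$, $j\in\mathcal{E}$, form an affine system over $GF(2)$ in the unknowns $(U_i,U_{i+1}^n)$, whose solution set $S\subseteq GF(2)^{n-i+1}$ is either empty or an affine subspace. The decisive observation is the standard dichotomy for affine subspaces: the projection of $S$ onto the $U_i$-coordinate is either constant, in which case exactly one value of $U_i$ admits any completion and the likelihood ratio is $0$ or $\infty$, or surjective, in which case the two fibers $\{U_i=0\}\cap S$ and $\{U_i=1\}\cap S$ have equal size $|S|/2$ and the likelihood ratio is $1$. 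Outputs for which $S$ is empty carry zero probability and may be discarded.

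This trichotomy is exactly the defining property of a BEC, so $W(Z^n,U^{i-1}\mid U_i)$ is a BEC, with erasure probability equal to the total mass of the outputs for which $U_i$ is undetermined. The one step that requires care is the affine-subspace balancing argument, i.e., verifying that when $U_i$ is not fixed on $S$ the two halves are exactly equinumerous; this is what rules out any intermediate likelihood ratio and is precisely where the linearity of the code, rather than an arbitrary code, is used. The remaining bookkeeping of separating the erasure-pattern prefactor from the counting term is routine.
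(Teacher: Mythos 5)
Your argument is correct. Note that the paper does not actually prove this statement: it is presented as a known fact with a citation to the reference \cite{fazeli2014}, and the only reasoning the paper offers is the solvability criterion in Section \ref{Numerical_I} (the bit-channel is non-erased if and only if $U_i$ can be solved from the affine system $\overline{U}_{past}\bar{G_1}^{past} = Z'+U_{past}G_1^{past}$). Your proof is exactly the argument behind that criterion, carried out in full: factoring out the erasure pattern, reducing the likelihood ratio to a count of consistent completions, and invoking the affine-subspace dichotomy (projection onto the $U_i$-coordinate either constant or surjective with equinumerous fibers) to rule out any likelihood ratio other than $0$, $1$, or $\infty$. The only point left implicit is that the resulting channel has the \emph{same} erasure probability under both inputs, so that it is literally a BEC rather than an input-dependent erasure channel; this is immediate, since the undetermined outputs are precisely those with likelihood ratio $1$ and hence carry equal mass under $U_i=0$ and $U_i=1$.
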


\begin{lemma}
The average TVD of a BEC($\tilde{p}$) with uniform input $X$ and output $Y$, is $d(P_{XY},\frac{1}{2}P_{Y}) = \frac{1}{2}(1-\tilde{p}) = \frac{1}{2}I(X;Y)$.
\end{lemma}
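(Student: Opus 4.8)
The plan is to prove this by direct computation, exploiting the simple structure of the BEC. First I would fix the alphabets: the input is $X\in\{0,1\}$ with $P_X(0)=P_X(1)=\frac{1}{2}$, and the output is $Y\in\{0,1,e\}$, where $e$ is the erasure symbol. The channel law gives $P_{Y|X}(x|x)=1-\tilde{p}$ and $P_{Y|X}(e|x)=\tilde{p}$ for $x\in\{0,1\}$, while $P_{Y|X}(y|x)=0$ whenever $y\neq x$ and $y\neq e$. Averaging over the uniform input yields the marginal $P_Y(0)=P_Y(1)=\frac{1}{2}(1-\tilde{p})$ and $P_Y(e)=\tilde{p}$.

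Next I would expand the definition $d(P_{XY},\frac{1}{2}P_Y)=\frac{1}{2}\sum_{x,y}\lvert P_{XY}(x,y)-\frac{1}{2}P_Y(y)\rvert$ and evaluate it over the six pairs $(x,y)$. The key observation is that on the erasure output the joint factorizes exactly, since $P_{XY}(x,e)=\frac{1}{2}\tilde{p}=\frac{1}{2}P_Y(e)$ for both $x=0$ and $x=1$; hence these two terms vanish. For each of the four remaining pairs with $y\in\{0,1\}$ a short calculation shows the absolute difference equals $\frac{1}{4}(1-\tilde{p})$ (for instance $\lvert P_{XY}(0,0)-\frac{1}{2}P_Y(0)\rvert=\lvert\frac{1}{2}(1-\tilde{p})-\frac{1}{4}(1-\tilde{p})\rvert=\frac{1}{4}(1-\tilde{p})$, and similarly for the other three). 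Summing the four contributions and multiplying by $\frac{1}{2}$ gives $d(P_{XY},\frac{1}{2}P_Y)=\frac{1}{2}(1-\tilde{p})$.

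Finally, to obtain the second equality I would invoke the standard fact that for a BEC($\tilde{p}$) with uniform binary input the mutual information is $I(X;Y)=(1-\tilde{p})H(X)=1-\tilde{p}$, since an unerased symbol reveals $X$ perfectly while an erased symbol conveys nothing about it. Therefore $\frac{1}{2}I(X;Y)=\frac{1}{2}(1-\tilde{p})$, which matches the computed total variation distance and completes the chain of equalities.

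There is no genuinely hard step here, as the result reduces to a finite case check. The only point deserving care is the bookkeeping of the erasure symbol: recognizing that conditioned on $Y=e$ the output is statistically independent of $X$ is precisely what makes those terms drop out, so that the total variation distance isolates exactly the contribution of the unerased probability mass $1-\tilde{p}$.
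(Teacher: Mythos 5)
Your computation is correct, and the paper itself omits the proof as ``straightforward''---your direct case-by-case evaluation over the six $(x,y)$ pairs, with the erasure terms cancelling exactly, is precisely the argument the authors had in mind. Nothing further is needed.
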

The proof is straightforward and is omitted. The following Lemma was proven in \cite[Lemma 15]{mahdavifar2011}:
\begin{lemma}
$I(M^k;Z^n)= \sum_{j=1}^k I(U_{i_j};Z^nU_{i_1,..., i_{j-1}}) \leq \sum_{j=1}^kI(U_{i_j};Z^nU^{i_{j}-1})$.
\end{lemma}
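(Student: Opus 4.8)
The plan is to establish the two parts separately: the equality is the chain rule for mutual information combined with the independence of the input bits, and the inequality is the data processing inequality applied term by term.

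For the equality, I would first identify each message bit with its position, i.e. $M_j = U_{i_j}$ and hence $M^k = U_{i_1,\dots,i_k}$. Applying the chain rule for mutual information to the ordered list $U_{i_1},\dots,U_{i_k}$ gives
\begin{align}
I(M^k;Z^n) = \sum_{j=1}^k I(U_{i_j};Z^n \mid U_{i_1,\dots,i_{j-1}}).
\end{align}
It then remains to replace each \emph{conditional} mutual information by the \emph{joint} mutual information $I(U_{i_j};Z^n U_{i_1,\dots,i_{j-1}})$ appearing in the statement. Expanding the latter by the chain rule,
\begin{align}
I(U_{i_j};Z^n U_{i_1,\dots,i_{j-1}}) = I(U_{i_j};U_{i_1,\dots,i_{j-1}}) + I(U_{i_j};Z^n \mid U_{i_1,\dots,i_{j-1}}),
\end{align}
and the first term vanishes because the bits of $U^n$ are i.i.d. uniform, so $U_{i_j}$ is independent of $U_{i_1,\dots,i_{j-1}}$. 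Substituting this back yields the claimed equality.

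For the inequality, I would argue term by term. Using the ordering convention $i_1<\dots<i_k$ from the footnote, the index set $\{i_1,\dots,i_{j-1}\}$ is contained in $\{1,\dots,i_j-1\}$, so $U_{i_1,\dots,i_{j-1}}$ is a subvector of $U^{i_j-1}$. Consequently the pair $(Z^n,U_{i_1,\dots,i_{j-1}})$ is a deterministic coordinate projection of $(Z^n,U^{i_j-1})$, and the data processing inequality (mutual information cannot increase under a deterministic map of the observation) gives
\begin{align}
I(U_{i_j};Z^n U_{i_1,\dots,i_{j-1}}) \leq I(U_{i_j};Z^n U^{i_j-1}).
\end{align}
Summing over $j$ completes the proof.

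All steps here are standard information-theoretic identities, so I do not expect a deep obstacle; the points requiring care are the correct reading of $I(U_{i_j};Z^n U_{i_1,\dots,i_{j-1}})$ as a joint rather than conditional mutual information, and the explicit use of the i.i.d. uniform structure of $U^n$ that lets the term $I(U_{i_j};U_{i_1,\dots,i_{j-1}})$ be dropped. The ordering assumption $i_1<\dots<i_k$ is precisely what guarantees the subvector containment underlying the data processing step.
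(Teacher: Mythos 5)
Your proof is correct. The paper itself does not prove this lemma---it cites it as Lemma~15 of Mahdavifar and Vardy---but your argument (chain rule over $U_{i_1},\dots,U_{i_k}$, dropping $I(U_{i_j};U_{i_1,\dots,i_{j-1}})$ by the i.i.d.\ uniformity of $U^n$, then data processing since $U_{i_1,\dots,i_{j-1}}$ is a subvector of $U^{i_j-1}$ under the ordering $i_1<\dots<i_k$) is exactly the standard derivation of that cited result, and the final step mirrors the paper's own Lemma~2, which obtains the degradation by dismissing the extra outputs.
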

From the above Lemma and using Lemma 4, we observe that in the case of the BEC, the leakage is
\begin{align*}
  I(M^k;Z^n) &=2d(P_{Z^nU_{i_1,..., i_{j}}},\frac{1}{2} P_{Z^nU_{i_1,..., i_{j-1}}}) \nonumber\\
  &\leq 2\sum_{j=1}^k d(P_{Z^nU^{i_j}},\frac{1}{2} P_{Z^nU^{i_j-1}}). 
\end{align*}
To the best of our knowledge, there are no results available in the literature for the information-theoretic limits of secrecy rate in finite blocklengths considering the leakage as the secrecy metric. Therefore, in this work the average TVD is used as the secrecy metric.


\section{Numerical Results}\label{Numerical}
In this section we analyze the secrecy performance of polar and and Reed-Muller codes in BEC semi-deterministic wiretap channels in short and medium blocklengths using numerical simulations.
\subsection{Average TVD of bit-channels for Reed-Muller and polar codes}\label{Numerical_I}
In this subsection, we present the average TVD of each bit-channel $W(Z^nU^{i-1}|U_i)$ for BEC(0.4) and blocklength $n=128$ bits. The erasure probability of bit-channels for polar codes can be calculated by using the closed form expression in \cite{arikan2009}. On the other hand, to derive the erasure probability of bit-channels for Reed-Muller codes, we use Monte-Carlo simulations with $2\cdot 10^5$ channel realizations.

In our Monte-Carlo simulations, for each channel realization, we consider the vector $Z'$ obtained by omitting the erased bits of the observed vector $Z$ in that channel realization, and the matrix $G_1$ obtained by removing the columns of $G$ that correspond to erased bits. For decoding the bit-channel $i$, as the past bits are assumed to be given, they do not have effect on the uncertainty of the bit-channels. Let $U_{past}$ and $\bar{U}_{past}$ denote the subvectors of $U^n$ corresponding to known and unknown bits, and $G_1^{past}$ and $\bar{G_1}^{past}$ denote the corresponding submatrices of $G_1$. Then
\begin{align}
    \overline{U}_{past}\bar{G_1}^{past} = Z'+U_{past}G_1^{past},
\end{align}
where the right-hand side is known. The bit-channel $i$ is not erased if and only if $U_i$ can be solved from the above equation (Section II.A in \cite{fazeli2014}).


In Fig. \ref{Fig:TVD} the average TVD of bit-channels for Reed-Muller and polar codes are presented. We observe that for Reed-Muller codes, the good bit-channels are almost concentrated in the last bits and the poor bit-channels are also largely concentrated in the first bits. For polar codes, the concentration is not as dense as for Reed-Muller codes. In Fig. \ref{Fig:TVD_sorted}, the bit numbers are sorted with respect to the average TVD of the bit-channels. In this figure, we observe that Reed-Muller codes have higher polarization speed compared to polar codes, as was shown in \cite{abbe2020}.

\begin{figure}[t]
\centering
\includegraphics[width=0.48\textwidth]{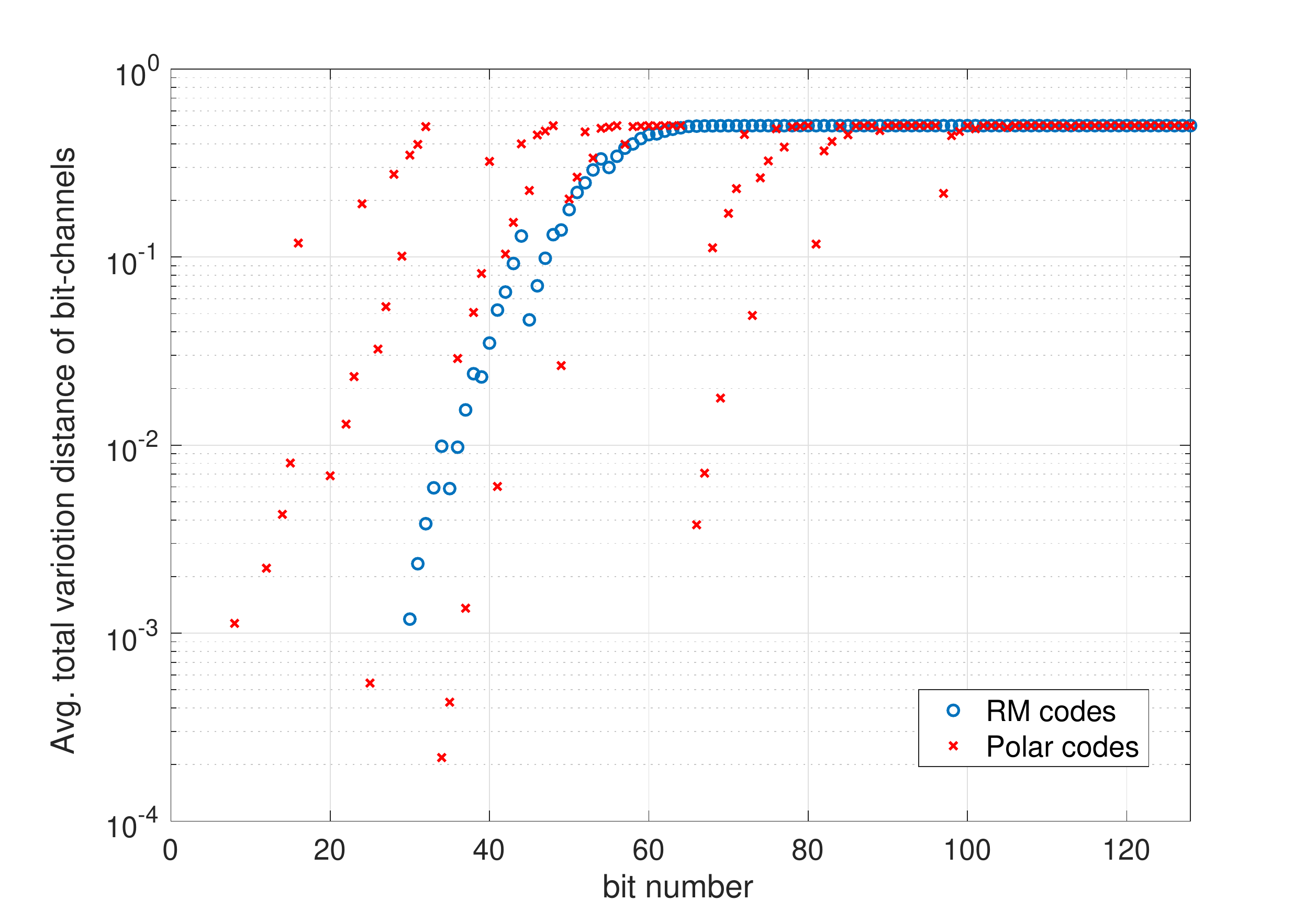}
\caption{The average TVD of bit-channels for BEC with erasure probability $p=0.4$ and $n=128$.}
\label{Fig:TVD}
\end{figure}
\vspace{-0.05cm}
\begin{figure}[t]
\centering
\includegraphics[width=0.48\textwidth]{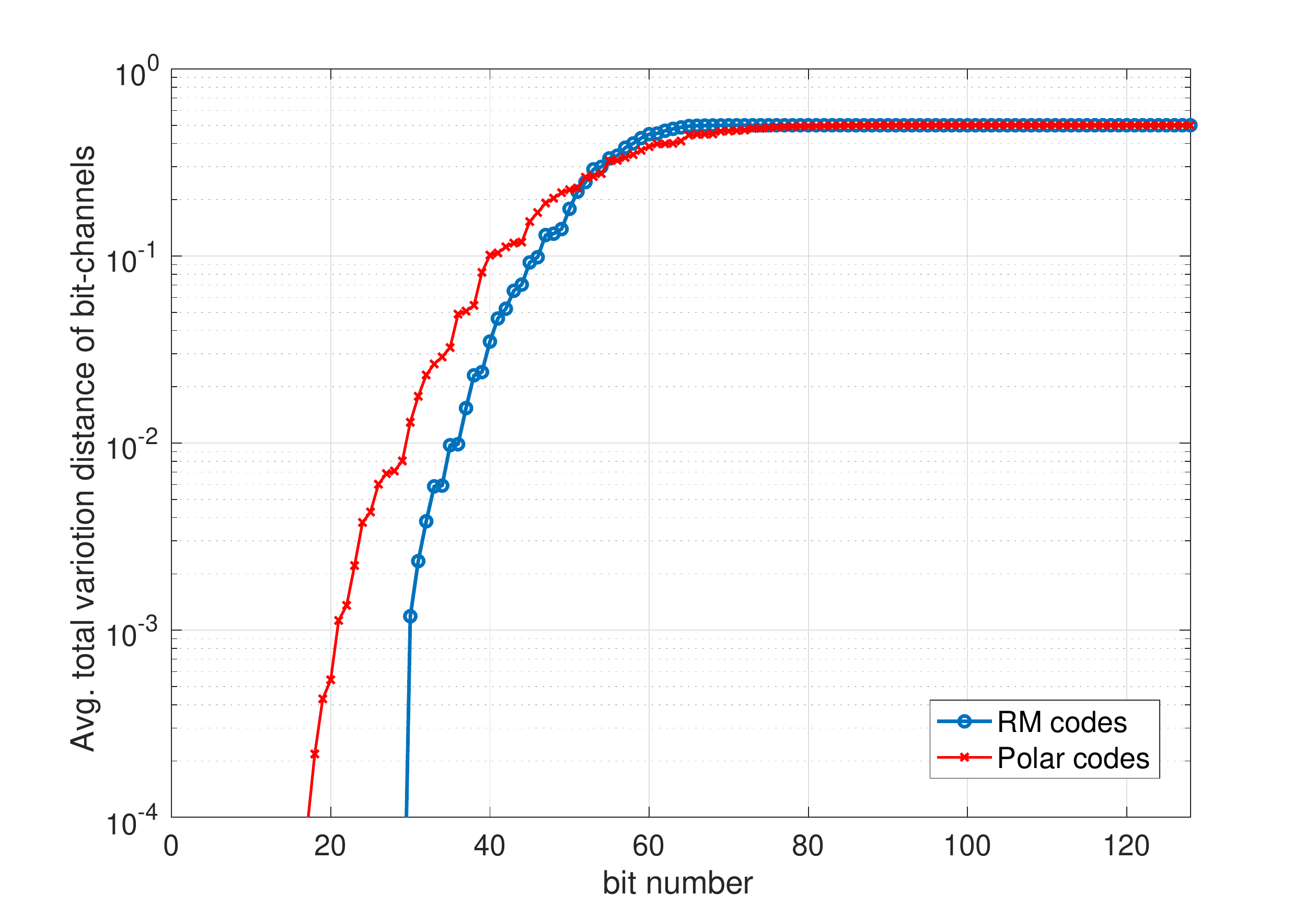}
\caption{The average TVD of bit-channels for BEC with erasure probability $p=0.4$ and $n=128$. The bit numbers are sorted by the average TVD of the bit-channels.}
\label{Fig:TVD_sorted}
\end{figure}
\vspace{-0.1cm}
\subsection{Lower bounds on the achievable secrecy rates}\label{Numerical_II}

\begin{figure}[t]
\centering
\includegraphics[width=0.48\textwidth]{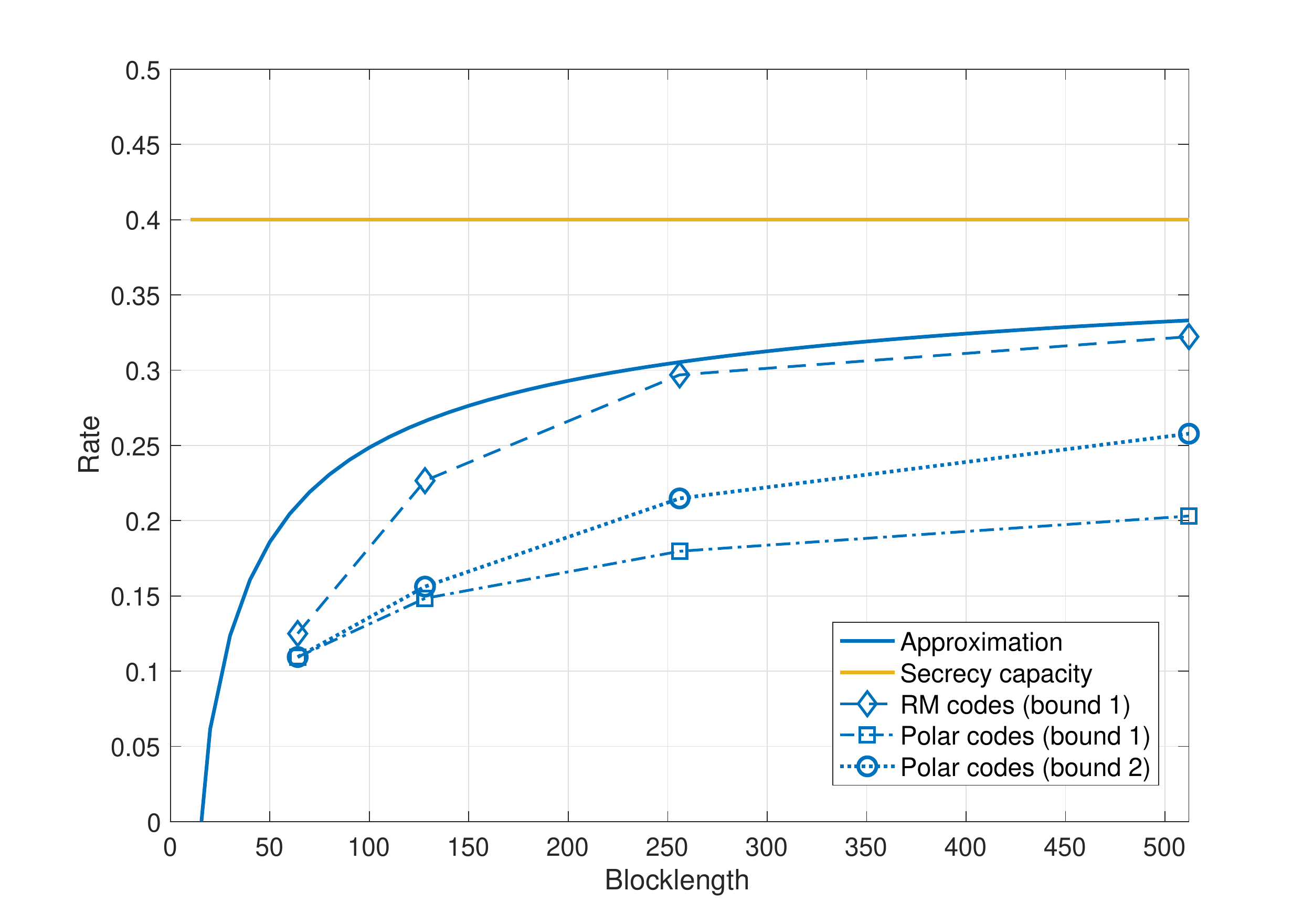}
\caption{Comparison of the lower bound on achievable secrecy rates of Reed-Muller and polar codes for $p=0.4$ and $\delta=0.001$, with the second order approximation secrecy rate in (\ref{Eq:Approx}).}
\label{Fig:1A}
\end{figure}
\vspace{-0.05cm}
\begin{figure}[t]
\centering
\includegraphics[width=0.48\textwidth]{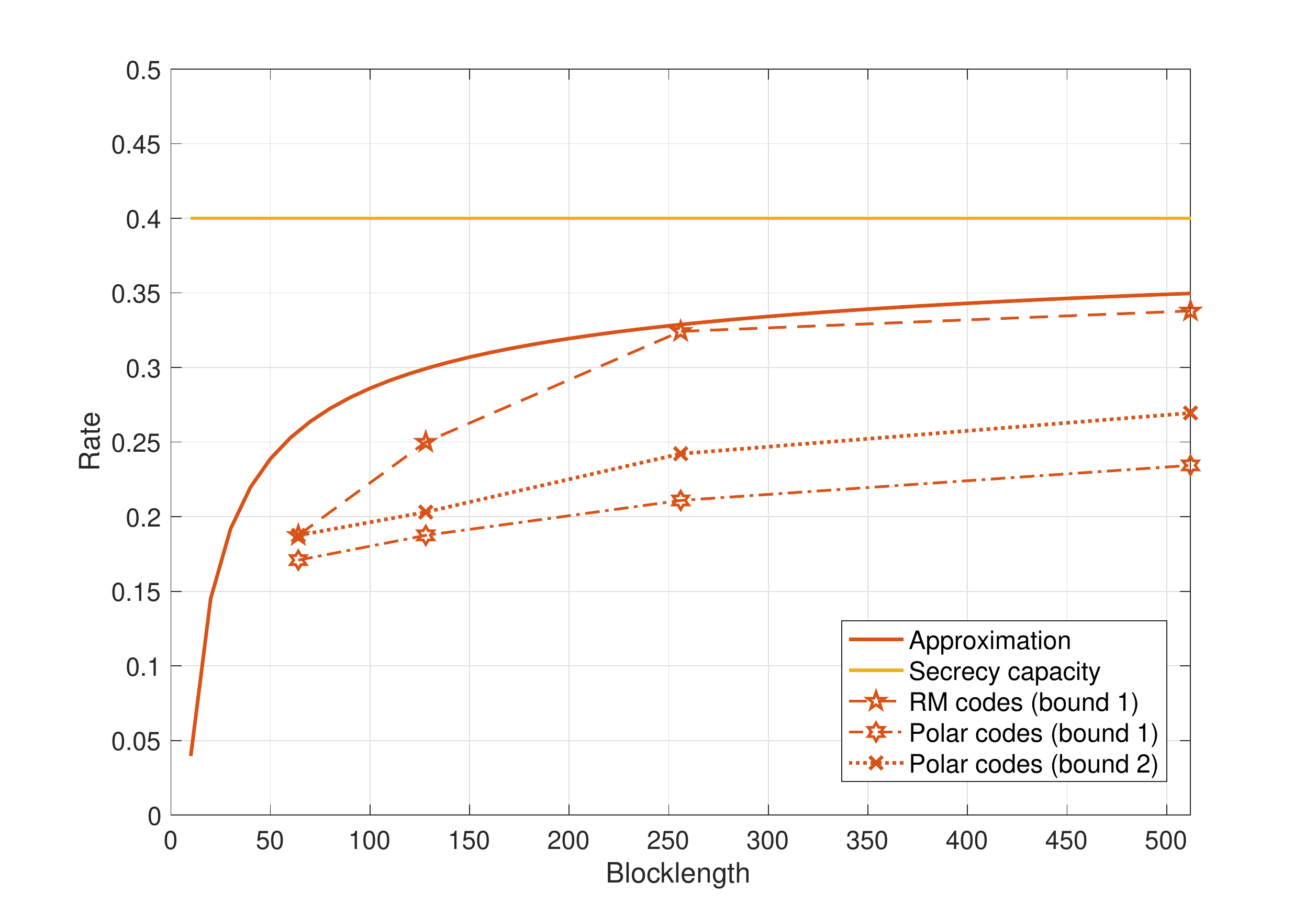}
\caption{Comparison of the lower bound on achievable secrecy rates of Reed-Muller and polar codes for $p=0.4$ and $\delta=0.01$, with the second order approximation secrecy rate in (\ref{Eq:Approx}).}
\label{Fig:1B}
\end{figure}
\vspace{-0.1 cm}
As an example, we consider a BEC(0.4). Figs. \ref{Fig:1A} and \ref{Fig:1B} present the lower bound on the achieved secrecy rates and the second order approximation of achievable secrecy rate for the cases $\delta =0.001$ and $\delta = 0.01$, respectively. We plot the lower bound on the secrecy rate for polar codes using bound 1 and bound 2. We calculate bound 1 by Monte-Carlo simulations and bound 2 by the closed form expression for the bit-channel capacity.

Although bound 1 is more accurate than bound 2 for polar codes, there are some advantages in using bound 2. In fact, using bound 2 we can find the lower bound on the secrecy rate for any value of $\delta$, while for very small values of $\delta$ ($\sim 10^{-9}$), Monte Carlo simulations are not feasible. Moreover, in the case of polar codes, some methods providing good approximations of the bit-channels  $W(Z^nU^{i-1}|U_i)$ for other channel models, are available in the literature \cite{tal2013}. Therefore it is possible to evaluate bound 2 for more general channel models. For Reed-Muller codes the two bounds give the same result, as all the messages are the first top rows of $G_{RM}$, so the two sides in Corollary 1 are equal. We plot the lower bound on the achievable secrecy rates of polar codes using bound 1 by Monte-Carlo simulations.

We observe that Reed-Muller codes show a promising performance in BEC semi-deterministic wiretap channels, specifically for blocklengths 256 and 512. For example at blocklength 256 and $\delta = 0.01$, the second order secrecy rate is approximately 0.329 bits per channel use and the lower bound on achievable secrecy rate of Reed-Muller codes is 0.3242 bits per channel use, i.e., less than 1.5\% away from the second order approximation secrecy rate. On the other hand, there is a significant gap between the achievable rates of polar codes and the second order approximation secrecy rate. This observation verifies the results in \cite{abbe2020}, that Reed-Muller codes polarize faster than polar codes.


\section{Conclusions and Discussion}\label{Conclusion}
In this paper, we evaluated the short blocklength secrecy performance of polar and Reed-Muller codes for the BEC semi-deterministic wiretap channels, while the results can be extended to the case where the main channel is noisy. The interest in investigating the secrecy-reliability trade-off of codes is more general, particularly for the application of wiretap codes in IoT scenarios. The evaluation of the achievable secrecy rate (or good approximations) for other codes and more general wiretap models is a timely open problem. Understanding the properties of generator matrices for secrecy coding is fundamental for the design of good short blocklength wiretap codes.

\section*{Acknowledgement}
The authors would like to thank Min Ye for kindly sharing his code for the Monte-Carlo simulation of bit-channel erasure probabilities. This work has been supported by the ANR PRCI project ELIOT (ANR-18-CE40-0030 / FAPESP 2018/12579-7) and the INEX Paris Universite projects PHEBE and eNiGMA.

\balance
\begin{footnotesize}
\bibliographystyle{IEEEtran}

\bibliography{refs}
\end{footnotesize}

\end{document}